\documentclass[]{IEEEtran}
\usepackage{amsfonts}
\usepackage{amsmath}
\usepackage{multirow}
\usepackage{amsfonts}
\usepackage{amsmath,epsfig}
\usepackage{stfloats}
\usepackage{amssymb}
\usepackage{url}
\usepackage{bm}
\usepackage{cite}
\usepackage{amsmath}
\usepackage{amssymb}
\usepackage{latexsym}
\usepackage{multirow}
\usepackage{epsfig}
\usepackage{graphics}
\usepackage{mathrsfs}
\usepackage{subfigure}
\usepackage[all]{xy}
\usepackage{multicol,lipsum}
\usepackage{commath}

\usepackage{textcomp}
\usepackage{pifont}
\usepackage{bbding}
\usepackage{stmaryrd}
\usepackage{amssymb}
\usepackage{amsfonts}
\usepackage{epic}
\usepackage{algpseudocode}
\usepackage{stfloats}
\usepackage{epstopdf}
\usepackage{bm}
\usepackage{xcolor}
\usepackage{mathrsfs}
\usepackage{pifont}
\usepackage{bbding}
\usepackage{amsmath,epsfig}
\usepackage{stmaryrd}
\usepackage{amsmath}
\usepackage{amssymb}
\usepackage{amsthm}

\usepackage{amsfonts}
\usepackage{epic}
\usepackage{graphicx}
\usepackage{subfigure}
\usepackage{enumerate}
\usepackage{latexsym}
\usepackage{epstopdf}
\usepackage{multirow}
\usepackage{stfloats}
\usepackage{bm}
\usepackage{booktabs}
\usepackage{color}
\usepackage{cases}
\usepackage{array}
\usepackage{makecell}
\usepackage{times}
\usepackage[ruled,linesnumbered]{algorithm2e}
\usepackage{booktabs}
\usepackage{stackengine}
\usepackage{tikz}

\newtheorem{proposition}{Proposition}

\graphicspath{{./figs/}}




\begin{document}
\title{Pilot Decontamination for Massive MIMO Network with UAVs}
\author{\IEEEauthorblockN{Rui Lu, Qingqing Wu, \emph{Member, IEEE}, and Rui Zhang, \emph{Fellow, IEEE}
\thanks{R. Lu is with the Faculty of Electronic and Information Engineering, Xi'an Jiaotong University, Xi'an 710049, China (email:lorryxj@outlook.com). Q. Wu and R. Zhang are with the Department of Electrical and Computer Engineering, National University of Singapore, Singapore 117583 (email:\{elewuqq, elezhang\}@nus.edu.sg).}}}
\maketitle

\begin{abstract}
This letter studies the pilot contamination (PC) problem for massive multiple-input multiple-output (MIMO) networks with coexisting terrestrial users and unmanned aerial vehicles (UAVs). Due to the strong line-of-sight (LoS) air-to-ground channels between UAVs and base stations (BSs), UAVs usually cause a more severe PC issue as compared to the traditional terrestrial users.
To mitigate the PC caused by UAVs, we propose a low-complexity distributed scheme by exploiting the full-dimensional beamforming of massive MIMO BSs and the angle-dependent LoS channels between them and high-altitude UAVs.
Numerical results show the effectiveness of the proposed pilot decontamination scheme and the significant signal-to-interference-plus-noise ratio (SINR) gains in both the uplink and downlink after pilot decontamination.
\end{abstract}

\begin{IEEEkeywords}
Massive multiple-input multiple-output, pilot contamination, unmanned aerial vehicle (UAV).
\end{IEEEkeywords}
\section{Introduction}\label{sct_introduction}
Massive multiple-input multiple-output (MIMO) is a promising solution to enable high-capacity communications for not merely traditional ground user equipments (GUEs), but also the new and emerging aerial users such as unmanned aerial vehicles (UAVs) \cite{zeng2019accessing,MassiveMIMO_UAV_EL_18,huang_beamtracking_wctl_2020,duo_bin_tvt_2020}. 
In future massive MIMO networks with coexisting UAVs and GUEs, UAVs may cause/suffer severe interference to/from a large number of non-associated base stations (BSs) due to the strong line-of-sight (LoS)-dominant UAV-BS channels.
Although massive MIMO processing at BSs can effectively mitigate the co-channel interference, its performance critically depends on the accuracy of the spatial channel state information (CSI) at the BSs.
Furthermore, the network throughput is fundamentally limited by the estimated CSI errors due to the pilot reuse over adjacent cells, the so-called \emph{pilot contamination (PC)} problem, even when the number of BS antennas goes to infinity \cite{PCP_MTL_TIT_18}.
In practice, the GUE-induced PC can be resolved if a sufficiently large pilot reuse factor is applied such that the same pilot can be avoided being reused by adjacent cells.
However, this method fails to deal with the UAV-induced PC due to the strong LoS-dominant UAV-BS channels, rendering that even two cells that are far apart may still suffer from the PC and its resultant interference.

Besides increasing the pilot reuse factor, other pilot decontamination schemes have also been proposed for terrestrial massive MIMO networks (see, e.g., \cite{PCP_MTL_TIT_18,SF_GongZJ_TWC_19,PC_coordinated_Yin_13} and references therein), while they face new challenges to mitigate the UAV-induced PC. 
For example, the large-scale fading precoding/decoding (LSFP and LSFD) algorithms in \cite{PCP_MTL_TIT_18} can eliminate PC by applying multi-cell cooperative processing. For terrestrial networks, the overhead of information exchange required for the cooperation is moderate because only a few BSs need to be coordinated. Whereas with UAVs, due to the LoS-dominant UAV-BS channels, much more BSs are required to participate in the cooperation, incurring prohibitive overhead in practical implementation. Similarly, the protocol-based scheme in \cite{SF_GongZJ_TWC_19} also faces this challenge, since the implementation of its required dynamic synchronization is more costly when a large number of BSs are involved. 
The coordinated pilot assignment scheme in \cite{PC_coordinated_Yin_13} can effectively eliminate PC by assigning the same pilot to non-spatially overlapped users, given that the covariance matrices of their channels are available.
Again, this scheme needs excessive BS cooperation for communicating with UAVs and it is also practically difficult to obtain the channel covariance matrices accurately.
Compared to the above schemes, assigning dedicated pilots to UAVs for their exclusive use may be a more practical solution to avoid PC between UAVs and GUEs, whereas PC still exists and needs to be resolved among UAVs. Besides, this approach will reduce the number of pilots available for GUEs, thus is not sustainable if the number of UAV users significantly grows in future wireless networks. 

In this letter, we first show analytically that the signal-to-interference-plus-noise ratio (SINR) performance will be significantly degraded for both the UAVs and GUEs due to the UAV-induced PC, even without considering the GUE-induced PC.
To resolve the UAV-induced PC, we further propose an efficient pilot decontamination (PDC) scheme by exploiting the angle-of-arrival (AoA)-dependent characteristics of UAV-BS channels, without the requirements of multi-cell cooperation and any prior channel statistical knowledge. Specifically, each BS first detects the LoS components from the least square (LS) channel estimates based on matched filtering. Then, the interfering ones are identified among the detected LoS components and further removed from the LS channel estimate.
The proposed scheme is practically appealing because BSs can perform pilot assignment and decontamination independently and UAVs are allowed to reuse pilots with GUEs. Simulation results validate the effectiveness of the proposed PDC scheme. 

\section{System Model and Pilot Contamination} \label{sct_sysmdl}
\subsection{System Model}
We consider a multi-cell massive MIMO network operating in time-division duplexing (TDD) mode to serve both GUEs and UAVs, as shown in Fig. \ref{fig_sysmdl}. 
Assume that the pilots used for the users in one cell are orthogonal, and each pilot group is reused by the users in some other cells. Moreover, UAVs are allowed to reuse pilots with GUEs.
Suppose that the pilot reuse factor is $ R $, e.g., $ R = 7 $ as shown in Fig. \ref{fig_sysmdl}, while the frequency reuse factor is $ 1 $. 
Without loss of generality, we focus on one particular pilot and denote the set of users sharing it by $ \mathcal{K} \triangleq \{1,2,\cdots,K\} $, where $ K $ is the number of the users using this pilot.
In addition, these $ K $ users are respectively associated with $ K $ BSs, with $ \mathcal{L} \triangleq \{1,2,\cdots,K\} $ denoting their set.
Suppose that among the $ K $ users, $ K_{\mathrm{u}} $ with $ 1 \leq K_{\mathrm{u}} < K $ users are UAVs, and we define $ \mathcal{U} \subseteq \mathcal{K} $ as the subset consisting of the $ K_{\mathrm{u}} $ UAVs.
Each BS is equipped with a uniform circular array (UCA) consisting of $ M $ antennas, while each user employs a single antenna for simplicity.
\setlength{\textfloatsep}{1pt}
\begin{figure}[t]
	\centering
	\includegraphics[width=0.5\textwidth,draft=false]{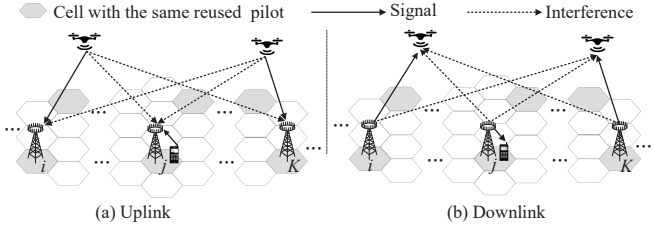}
	\caption{Pilot contamination in the cellular network with both UAVs and GUEs. \label{fig_sysmdl}}
\end{figure}

Denote the channel from the user $ k \in \mathcal{K} $ to the BS $ l \in \mathcal{L} $ by $ \mathbf{h}_{lk} \in \mathbb{C}^{M\times1} $, which is modeled by $ \mathbf{h}_{lk} = \sqrt{\beta_{lk}} \mathbf{g}_{lk} $,
where $ \beta_{lk} $ accounts for the large-scale path loss and $ \mathbf{g}_{lk} $ is a complex vector accounting for the small-scale fading. In this letter, we model the UAV-BS channels as LoS, and model the GUE-BS channels as Rayleigh fading. 
This is because we are mainly interested in the scenario that UAVs fly higher than BSs, in which their channels are dominated by LoS links in practice \cite{channel_model}. 
Thanks to the LoS propagation, for UAV-BS channels, $ {\mathbf{g}}_{lk} $ can be simplified as
$ {\mathbf{g}}_{lk} = \alpha_{lk} \mathbf{a}(\theta_{lk},\phi_{lk}) $, where $ \alpha_{lk} $ is a complex number denoting a random phase rotation with $ |\alpha_{lk}|^{2} = 1 $, and $ \mathbf{a}(\theta_{lk},\phi_{lk}) \in \mathbb{C}^{M \times 1} $ is the steering vector of the UCA at the BS with the $ m $-th element given by \cite[Section 4.2]{optimal_array_processing} 
\begin{equation}\label{equ_steering_vector}
\scalebox{0.99}{$[\mathbf{a}(\theta_{lk},\phi_{lk})]_{m} = \mathrm{exp} \left \{-j \frac{2 \pi d}{\lambda} \sin (\theta_{lk}) \cos (\phi_{lk} - \gamma_{m}) \right \} $}.
\end{equation}
%
In \eqref{equ_steering_vector}, $ \theta_{lk}$ and $ \phi_{lk} $ denote the elevation and azimuth angles of UAV $ k $ from BS $ l $, respectively, $ d $ denotes the radius of the BS UCA, $ \gamma_{m} = {2\pi(m-1)}/{M} $ denotes the angular location of antenna $ m $ on the UCA, and $ \lambda $ is the carrier wavelength. On the other hand, the small-scale fading components of GUE-BS channels are modeled as independent and identically distributed (i.i.d.) circularly symmetric complex Gaussian (CSCG) random variables with $ {\mathbf{g}}_{lk} \sim \mathcal{CN}(\mathbf{0}, \mathbf{I}) $, i.e., Rayleigh fading. 
Furthermore, based on the measurement results of air-to-ground channel (see, e.g., \cite{channel_model} and references therein), GUEs usually experience more severe path loss and shadowing than UAVs. 
As such, to focus on investigating the UAV-induced PC, we assume for simplicity that the GUE-induced PC (as well as its resultant interference) is negligible in this letter. This is practically valid since the GUE-induced PC can be solved by applying either the existing cooperative PDC schemes (see, e.g., \cite{PCP_MTL_TIT_18} and references therein) or a sufficiently large pilot reuse factor in the network.
%
\subsection{Uplink Channel Estimation and PC}\label{set_ULCE_PC}
As shown in Fig. \ref{fig_sysmdl}(a), to facilitate channel estimation, each user transmits the given pilot to its associated BS in the uplink training. Denote by $ \bm{\psi}_{0} \in \mathbb{C}^{\tau \times 1} $ the pilot sequence, with $ \bm{\psi}_{0}^{H} \bm{\psi}_{0} = 1 $ and $ \tau $ being the length of the pilot sequence. $ (\cdot)^{H} $ denotes the Hermitian transpose.
Then the received pilot signals during the pilot transmission, $ \mathbf{Y}_{l} \in \mathbb{C}^{M \times \tau} $, at BS $ l $ can be compactly expressed as
\begin{equation} \label{}
\mathbf{Y}_{l} = \sqrt{\tau p_{\mathrm{p}}}\mathbf{h}_{ll} \bm{\psi}^{T}_{0} + \sqrt{\tau p_{\mathrm{p}}} \sum_{k \in \mathcal{U}_{l}} \mathbf{h}_{lk} \bm{\psi}^{T}_{0} + \mathbf{N}_{l},
\end{equation}
where $ \mathcal{U}_{l} $ denotes the set of interfering UAVs to the user (a UAV or GUE) served by BS $ l $. To differentiate the two cases whether the user served by BS $ l $ is a UAV, we define $ \mathcal{U}_{l} $ as $ \mathcal{U}\setminus \{ l \} $ if $ l \in \mathcal{U} $, and $ \mathcal{U} $ otherwise. $ (\cdot)^{T} $ denotes the transpose.
In addition, $ {p_{\mathrm{p}}} $ denotes the pilot signal power, and $ \mathbf{N}_{l} \in \mathbb{C}^{M \times \tau} $ denotes the receiver noises, in which the elements are assumed to be i.i.d. CSCG random variables with zero mean and (normalized) unit variance.
By correlating the received signals $ \mathbf{Y}_{l} $ with $ \bm{\psi}_{0}^{*} $, we obtain the LS-based channel estimation given by
%
\begin{equation}\label{equ_chan_los_nlos}
		\hat{\mathbf{h}}_{ll} =\frac{\mathbf{Y}_{l} \bm{\psi}^{*}_{0}}{\sqrt{ \tau p_{\mathrm{p}}}} =\mathbf{{h}}_{ll} + \sum_{k \in \mathcal{U}_{l} } \alpha_{lk} \sqrt{\beta_{lk}} \mathbf{a}(\theta_{lk},\phi_{lk}) + \mathbf{n}_{l},
\end{equation}
where $ \mathbf{n}_{l} = {\mathbf{N}_{l} \bm{\psi}^{*}_{0}}/{\sqrt{ \tau p_{\mathrm{p}}}} \sim \mathcal{CN}(\mathbf{0}, \mathbf{I}/({\tau p_{\mathrm{p}}})) $. Note that the second term in (\ref{equ_chan_los_nlos}) is due to the strong LoS interference from the (other) UAVs. Thus, the PC problem arises due to the strong UAV interference, even under the assumption that the GUE-induced PC is already resolved (by e.g. using a sufficiently large pilot reuse factor).  

Now consider the uplink data transmission with the estimated channel in \eqref{equ_chan_los_nlos}.
Let $ {x}_{k} $ be the data sent by user $ k $ with $ \mathbb{E}[|x_{k}|^{2}] = 1, k \in \mathcal{K} $, where $ \mathbb{E}[\cdot] $ denotes the statistical expectation. The received signal at BS $ l $ can be expressed as\footnote{For simplicity, we consider the interference among the users using the same pilot only, while ignoring that from the other users using different pilots. This is because the interference from the users using different pilots vanishes as $ M \to \infty $ \cite{eng_spt_efficiency_2013_TCOM}. For finite $ M $, this assumption results in an upper bound for the users of interest since less interference is considered.}
\begin{equation}\label{key}
\mathbf{y}^{\mathrm{u}}_{l} = \sqrt{p_{\mathrm{u}}}\mathbf{h}_{ll} {x}_{l} + \sqrt{p_{\mathrm{u}}} \sum_{k \in \mathcal{U}_{l}} \mathbf{h}_{lk} x_{k} + \mathbf{n}^{\mathrm{u}}_{l},
\end{equation}
where $ p_{\mathrm{u}} $ is the transmit power for uplink data transmission, and $ \mathbf{n}^{\mathrm{u}}_{l} $ denotes the receiver noises with $ \mathbf{n}^{\mathrm{u}}_{l} \sim \mathcal{CN}(\mathbf{0},\mathbf{I}) $.
By applying maximum-ratio combining (MRC), the desired signal from the user served by BS $ l $ is given by $ \hat{x}_{l} = \overline{\mathbf{w}}_{l}^{H} \mathbf{y}^{\mathrm{u}}_{l} $, where $ \overline{\mathbf{w}}_{l} = {\hat{\mathbf{h}}_{ll}}/({\eta_{l} \sqrt{M}}) $ is the normalized combining vector with $ \eta_{l} = {||\hat{\mathbf{h}}_{ll}||}/{\sqrt{M}} $. Let $ p_{\mathrm{u}} = E_{\mathrm{u}}/{M} $ with $ E_{\mathrm{u}} $ being a constant regardless of $ M $ (for energy conservation with asymptotically large $ M $ \cite{eng_spt_efficiency_2013_TCOM}). Then, the uplink receive SINR for user $ l $ can be expressed as
\begin{equation}\label{equ_SINR_UL_GUE}
	\mathrm{SINR}^{\mathrm{u}}_{l} = \frac{{E_{\mathrm{u}}}/{\eta_{l}^{2}} \abs{{\hat{\mathbf{h}}_{ll}^{H} \mathbf{h}_{ll}}/{M}}^{2}}{\sum_{k \in \mathcal{U}_{l}} {E_{\mathrm{u}}}/{\eta_{l}^{2}} \abs{{\hat{\mathbf{h}}_{ll}^{H} \mathbf{h}_{lk}}/{M}}^{2} + 1}.
\end{equation}
To obtain its asymptotic value as $ M \to  \infty $, we introduce the following proposition. 
\begin{proposition}
	Let $ {\mathbf{p}}_{{i}} \in \mathbb{C}^{M \times 1} $ be a random vector whose elements are i.i.d. zero-mean random variables each with variance $ \sigma^{2} $, and for $ \bar{i} \neq i $, $ {\mathbf{p}}_{\bar{i}} $ and $ {\mathbf{p}}_{{i}} $ are independent. In addition, let $ {\mathbf{q}_{j}} \in \mathbb{C}^{M \times 1} $ be a steering vector given by \eqref{equ_steering_vector}, and for $ \bar{j} \neq j $, $ {\mathbf{q}}_{\bar{j}} $ and $ {\mathbf{q}}_{{j}} $ are associated with different AoAs. Then as $  M \to \infty $, we have
	\begin{equation}\label{equ_asm_rlt}
	\frac{{\mathbf{p}}^{H}_{i}{\mathbf{p}}_{\bar{i}}}{M} \xrightarrow[]{a.s} \delta_{i,\bar{i}} \sigma^{2}, \frac{{\mathbf{q}}^{H}_{j}{\mathbf{q}}_{\bar{j}}}{M} \xrightarrow[]{a.s} \delta_{j,\bar{j}},~{and}~
	\frac{{\mathbf{p}}^{H}_{i}{\mathbf{q}}_{{j}}}{M} \xrightarrow[]{a.s} 0,
	\end{equation}
	where $ \xrightarrow[]{a.s} $ denotes the almost sure convergence and $ \delta_{ij} $ is the Kronecker delta function.
\end{proposition}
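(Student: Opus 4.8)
The plan is to split the three convergences in \eqref{equ_asm_rlt} and treat the random vectors $\mathbf{p}_i$ by a strong law of large numbers and the deterministic steering vectors $\mathbf{q}_j$ by a Riemann-sum / oscillatory-sum estimate. Expanding entrywise, $\mathbf{p}_i^H\mathbf{p}_{\bar i}/M=\frac1M\sum_{m=1}^M[\mathbf{p}_i]_m^*[\mathbf{p}_{\bar i}]_m$. For $i=\bar i$ the summands $|[\mathbf{p}_i]_m|^2$ are i.i.d.\ with mean $\sigma^2<\infty$, so Kolmogorov's strong law yields $\xrightarrow[]{a.s}\sigma^2$. For $i\neq\bar i$ the independence of $\mathbf{p}_i$ and $\mathbf{p}_{\bar i}$ makes $\{[\mathbf{p}_i]_m^*[\mathbf{p}_{\bar i}]_m\}_m$ a sequence of independent zero-mean variables with $\mathrm{Var}([\mathbf{p}_i]_m^*[\mathbf{p}_{\bar i}]_m)=\sigma^4$, so $\sum_{m\ge1}\mathrm{Var}([\mathbf{p}_i]_m^*[\mathbf{p}_{\bar i}]_m)/m^2<\infty$ and Kolmogorov's criterion gives $\xrightarrow[]{a.s}0$; together these yield the first limit. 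For the cross term, $\mathbf{p}_i^H\mathbf{q}_j/M=\frac1M\sum_m[\mathbf{p}_i]_m^*[\mathbf{q}_j]_m$ is a sum of independent zero-mean terms with variance $|[\mathbf{q}_j]_m|^2\sigma^2=\sigma^2$, since every entry of a steering vector has unit modulus; again $\sum_{m\ge1}\sigma^2/m^2<\infty$, so Kolmogorov's criterion gives $\mathbf{p}_i^H\mathbf{q}_j/M\xrightarrow[]{a.s}0$, the third limit.

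It remains to evaluate $\mathbf{q}_j^H\mathbf{q}_{\bar j}/M$, which is deterministic. When $j=\bar j$ every term of $\frac1M\sum_m|[\mathbf{q}_j]_m|^2$ equals $1$, so the ratio is identically $1$. When $j\neq\bar j$, substitute \eqref{equ_steering_vector} and use $\sin\theta\cos(\phi-\gamma)=\sin\theta\cos\phi\cos\gamma+\sin\theta\sin\phi\sin\gamma$ to write the $\gamma_m$-dependent phase difference as $\rho\cos(\gamma_m-\varphi)$, where $\mathbf{v}_j\triangleq(\sin\theta_j\cos\phi_j,\ \sin\theta_j\sin\phi_j)$, $\rho=\tfrac{2\pi d}{\lambda}\lVert\mathbf{v}_j-\mathbf{v}_{\bar j}\rVert$, and $\varphi$ is a phase offset determined by $\mathbf{v}_j-\mathbf{v}_{\bar j}$. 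Since the two AoAs are distinct, $\mathbf{v}_j\neq\mathbf{v}_{\bar j}$ and hence $\rho>0$. Thus $\mathbf{q}_j^H\mathbf{q}_{\bar j}/M=\frac1M\sum_{m=1}^M e^{\,j\rho\cos(\gamma_m-\varphi)}$ is a Riemann sum over the equispaced nodes $\gamma_m=2\pi(m-1)/M$ of the continuous $2\pi$-periodic function $\gamma\mapsto e^{\,j\rho\cos(\gamma-\varphi)}$; as $M\to\infty$ it converges to $\frac1{2\pi}\int_0^{2\pi}e^{\,j\rho\cos\gamma}\,d\gamma=J_0(\rho)$, the Bessel function of the first kind of order zero. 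In the massive-MIMO regime the electrical aperture $d/\lambda$ grows with $M$, so $\rho\to\infty$ and $J_0(\rho)=O(\rho^{-1/2})\to0$, which yields $\mathbf{q}_j^H\mathbf{q}_{\bar j}/M\xrightarrow[]{a.s}\delta_{j,\bar j}$ and the proposition follows.

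The only delicate step is this last one. Verifying $\rho>0$ and passing to the integral $J_0(\rho)$ is routine, but concluding that the limit is exactly $0$ hinges on how the UCA aperture scales with $M$: with a fixed physical radius the off-diagonal limit is the nonzero constant $J_0(\rho)$, whereas with a fixed per-element arc spacing (so $d/\lambda\propto M$) the integrand oscillates on the sampling scale and the Riemann-sum passage must be replaced by a van der Corput / stationary-phase bound for the oscillatory sum $\sum_m e^{\,j\rho\cos(\gamma_m-\varphi)}$. One therefore either adopts the standard convention that $d/\lambda\to\infty$ as $M\to\infty$ (growing sub-linearly in $M$), or invokes the known asymptotic orthogonality of distinct UCA steering vectors in the large-array limit; the random parts above are entirely standard.
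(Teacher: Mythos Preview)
Your random-vector arguments (the first and third limits) are correct and take the same route as the paper, which likewise appeals to the law of large numbers --- citing \cite{eng_spt_efficiency_2013_TCOM} for $\mathbf{p}_i^H\mathbf{p}_{\bar i}/M$ and invoking the law of large numbers directly for $\mathbf{p}_i^H\mathbf{q}_j/M$. Your Kolmogorov-criterion version is a clean self-contained rendering of this.

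The second limit is where you diverge. The paper does not compute it but simply cites \cite[Lemma~1]{Dai_Orth_LoS_ICC_2017} for the asymptotic orthogonality of distinct UCA steering vectors; you instead attempt a direct derivation. Your reduction to $\frac{1}{M}\sum_{m} e^{j\rho\cos(\gamma_m-\varphi)}$ and the $J_0(\rho)$ identification are correct when $\rho$ is held fixed, and you rightly isolate the scaling issue in your last paragraph. Note, however, that the standard massive-MIMO UCA convention --- and the one underlying the cited lemma --- keeps the inter-element arc spacing fixed, so $d/\lambda\propto M$ and hence $\rho\propto M$; in that regime the Riemann-sum-to-integral passage fails because the integrand oscillates on the grid scale, and the van der Corput/stationary-phase bound you mention is genuinely required rather than optional. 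Thus your argument is complete only under the sub-linear-aperture assumption you propose, while for the regime the paper actually uses you are in effect falling back on the same external lemma. One small side remark: the map $(\theta,\phi)\mapsto(\sin\theta\cos\phi,\sin\theta\sin\phi)$ is not injective at $\theta=0$, so ``distinct AoAs $\Rightarrow\rho>0$'' tacitly needs $\theta>0$, which is consistent with the paper's UAV geometry but worth stating.
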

\begin{proof}
	Note that the first two results in \eqref{equ_asm_rlt} follow from \cite[Section II-B]{eng_spt_efficiency_2013_TCOM} and \cite[Lemma 1]{Dai_Orth_LoS_ICC_2017}, respectively, and the third one can be obtained by the law of large numbers.
\end{proof}

Since the channel vectors are independent, by applying \eqref{equ_asm_rlt}, we have 
\begin{equation}\label{equ_h_asymptotic}
	\frac{{\mathbf{h}}^{H}_{lk}{\mathbf{h}}_{\bar{l}\bar{k}}}{M} \xrightarrow[]{a.s} \left\{ \begin{array}{ll}
	\beta_{lk}, & \mathrm{if}~\bar{l}={l}, \bar{k} = {k};\\
	0,  & \mathrm{otherwise}.
	\end{array} \right.
\end{equation}
Define $ \eta_{l,\infty}^{2} $ as the asymptotic value of $ \eta_{l}^{2} $ as $ M \to \infty $. Based on \eqref{equ_chan_los_nlos} and \eqref{equ_h_asymptotic}, we have $ \eta_{l,\infty}^{2} = \lim\limits_{M \to \infty} {\hat{\mathbf{h}}_{ll}^{H} \hat{\mathbf{h}}_{ll}}/{{M}} \xrightarrow[]{a.s} \beta_{ll} + \sum_{k \in \mathcal{U}_{l} }\beta_{lk} + {1}/({ \tau p_{\mathrm{p}}}), \forall l.$
Substituting \eqref{equ_chan_los_nlos} into \eqref{equ_SINR_UL_GUE} and using \eqref{equ_h_asymptotic}, we can obtain the asymptotic uplink SINR of the user served by BS $ l $, which is given by
\begin{equation}\label{equ_SINR_UL_GUE_M}
\mathrm{SINR}^{\mathrm{u}}_{l} \xrightarrow[M \to \infty]{} \frac{E_{\mathrm{u}} \beta^{2}_{ll}/\eta_{l,\infty}^{2}}{\sum_{k \in \mathcal{U}_{l}} E_{\mathrm{u}} \beta_{lk}^{2} /\eta_{l,\infty}^{2} + 1}.
\end{equation}
It is observed from \eqref{equ_SINR_UL_GUE_M} that as $ M \to \infty $, even using large pilot reuse factor to remove the GUE-induced PC, the user SINR is still limited by the interference from UAVs (if $ K_{\mathrm{u}} > 0 $). Besides, a GUE suffers the interference from all $ K_{\mathrm{u}} $ UAVs while a UAV suffers that only from $ K_{\mathrm{u}} - 1 $ UAVs.
%
\subsection{Downlink Data Transmission with Contaminated Channel}\label{sec_sys_dl}
For the downlink data transmission, each BS treats the contaminated channel estimate as the true channel and uses conjugate precoding to transmit signal to its associated user.
Denote by $ x_{l} $ the information-bearing symbol intended for the user served by BS $ l $, which satisfies $ \mathbb{E}[|x_{l}|^{2}] = 1, l \in \mathcal{L} $. First, consider a UAV in the downlink for which the received signal is expressed as
\begin{equation}
x_{i}^{\mathrm{d}} = \sqrt{p_{\mathrm{d}}} \mathbf{h}_{ii}^{T} \underline{{\mathbf{w}}}_{i} x_{i} + \sqrt{p_{\mathrm{d}}} \sum_{l \in \mathcal{L} \setminus \{i\}} \mathbf{h}_{li}^{T} \underline{{\mathbf{w}}}_{l} x_{l} + {n}^{\mathrm{d}}_{i},
 \end{equation}
where $ p_{\mathrm{d}} $ is the downlink transmit power and $ \underline{{\mathbf{w}}}_{l} = {{\hat{\mathbf{h}}}_{ll}^{*}}/({\eta_{l} \sqrt{M}}) $ is the precoding vector with $ \eta_{l} = {||\hat{\mathbf{h}}_{ll}||}/{\sqrt{M}} $. $ {n}^{\mathrm{d}}_{i} $ is the receiver noise with $ {n}^{\mathrm{d}}_{i} \sim \mathcal{CN}(0,1) $.
Similar to Section \ref{set_ULCE_PC}, let $ p_{\mathrm{d}} = E_{\mathrm{d}}/{M} $ with $ E_{\mathrm{d}} $ being a constant regardless of $ M $. Then the downlink receive SINR can be expressed as 
\begin{equation}\label{equ_SINR_DL_UAV}
\begin{split}
	\mathrm{SINR}^{\mathrm{d}}_{i} = \frac{{E_{\mathrm{d}}}/{\eta_{i}^{2}} \abs{{\mathbf{h}_{ii}^{H} \hat{\mathbf{h}}_{ii}}/{M}}^{2}}{\sum_{l \in \mathcal{L} \setminus \{i\}} {E_{\mathrm{d}}}/{\eta_{l}^{2}} \abs{{{\mathbf{h}}_{li}^{H} \hat{\mathbf{h}}_{ll}}/{M}}^{2} + 1 }.
\end{split}
\end{equation}
Substituting \eqref{equ_chan_los_nlos} into \eqref{equ_SINR_DL_UAV} and using \eqref{equ_h_asymptotic}, the asymptotic downlink receive SINR for the UAV can be similarly derived as
\begin{equation}\label{equ_SINR_DL_UAV_M}
\mathrm{SINR}^{\mathrm{d}}_{i} \xrightarrow[M \to \infty]{} \frac{E_{\mathrm{d}} \beta^{2}_{ii}/\eta_{i,\infty}^{2}}{\sum_{l \in \mathcal{L} \setminus \{i\}} E_{\mathrm{d}} \beta_{li}^{2} /\eta_{l,\infty}^{2} + 1}.
\end{equation}
On the other hand, for a GUE, its received signal is expressed as
\begin{equation}
x_{j}^{\mathrm{d}} = \sqrt{{p_{\mathrm{d}}}} \mathbf{h}_{jj}^{T} \underline{{\mathbf{w}}}_{j} x_{j}  + {n}^{\mathrm{d}}_{j},
\end{equation}
where $ {n}^{\mathrm{d}}_{j} $ denotes the receiver noise with $ {n}^{\mathrm{d}}_{j} \sim \mathcal{CN}(0,1) $.
Then the downlink receive SINR of each GUE is given by
\begin{equation}\label{equ_SINR_DL_GUE}
\mathrm{SINR}^{\mathrm{d}}_{j} = {\frac{E_{\mathrm{d}}}{\eta_{j}^{2}} \abs{\frac{\mathbf{h}_{jj}^{H} \hat{\mathbf{h}}_{jj}}{M}}^{2}} \xrightarrow[M \to \infty]{} {E_{\mathrm{d}}} \beta^{2}_{jj}/\eta_{j,\infty}^{2}.
\end{equation}
Due to the PC in the uplink channel estimation, each BS in the downlink data transmission fails to steer its beam directly towards its associated user. Consequently, as suggested by \eqref{equ_SINR_DL_UAV_M} and \eqref{equ_SINR_DL_GUE}, each user (regardless of GUE or UAV) suffers a certain signal power loss, while each UAV suffers additional interference from all the other $ K - 1 $ BSs.
%
\subsection{Performance Comparison Before versus After PDC}\label{sec_sys_cmp}
Fortunately, if each BS can detect the LoS interference from all non-associated UAVs (i.e., the second term in (\ref{equ_chan_los_nlos})), the interference that users suffer in both the uplink and downlink can be eliminated completely. 
Denote by $ \mathcal{A}_{l} = \{ \mathbf{a}(\theta_{lk},\phi_{lk})| k \in {\mathcal{U}_{l}} \} $ the set consisting of the steering vectors associated with all interfering UAVs, where $ (\theta_{lk},\phi_{lk}) $'s are their AoAs.
Then we can obtain the channel estimate after (perfect) PDC as $ \hat{\mathbf{h}}_{ll}^{\mathrm{D}} \in \mathbb{C}^{M \times 1} $ by removing the UAVs' interference term in (\ref{equ_chan_los_nlos}), i.e.,
\begin{equation}\label{equ_chn_prj}
\hat{\mathbf{h}}_{ll}^{\mathrm{D}} = \mathbf{P}^{\perp}_{l} \hat{\mathbf{h}}_{ll} = \mathbf{P}^{\perp}_{l} \mathbf{{h}}_{ll} +  \mathbf{P}^{\perp}_{l} \mathbf{n}_{l},
\end{equation}
where $ \mathbf{P}^{\perp}_{l} = \mathbf{I} - \mathbf{A}_{l} \mathbf{A}_{l}^{\dagger} $ and $ \mathbf{A}_{l} \in \mathbb{C}^{M \times |{\mathcal{U}_{l}}|} $ is the matrix with columns being all the elements in $ \mathcal{A}_{l} $. 
$(\cdot)^{\dagger} $ denotes the pseudo inverse, and $ |{\mathcal{U}_{l}}| $ denotes the cardinality of the set $ {\mathcal{U}_{l}} $. Define $ \overline{\mathbf{w}}_{l} = {\hat{\mathbf{h}}_{ll}^{\mathrm{D}}}/({\hat{\eta_{l}}\sqrt{M}}) $ and $ \underline{{\mathbf{w}}}_{l} = {(\hat{\mathbf{h}}_{ll}^{\mathrm{D}})^{*}}/({\hat{\eta_{l}}\sqrt{M}}) $ as the new (receive) combining and (transmit) precoding vectors after PDC, respectively, with $ \hat{\eta_{l}} = {\|{\hat{\mathbf{h}}_{ll}^{\mathrm{D}}}\|}/{\sqrt{M}} $. Then the asymptotic SINRs of the uplink and downlink data transmissions after PDC are given by
\begin{equation}\label{equ_SINR_DC}
	\scalebox{0.95}{$\hat{\mathrm{SINR}}^{\mathrm{u}}_{l} 
	 \xrightarrow[M \to \infty]{} E_{\mathrm{u}} \beta^{2}_{ll}/\hat{\eta}_{l,\infty}^{2},~
	\hat{\mathrm{SINR}}^{\mathrm{d}}_{l}  \xrightarrow[M \to \infty]{} E_{\mathrm{d}} \beta^{2}_{ll}/\hat{\eta}_{l,\infty}^{2}$},
\end{equation}
where $ {\hat{\eta}}_{l,\infty}^{2} = \lim\limits_{M \to \infty} {\hat{\eta}}_{l}^{2} = \beta_{ll} + {1}/({\tau p_{\mathrm{p}}}), \forall l $.
It is observed from (\ref{equ_SINR_DC}) that after PDC, the UAV-caused interference in the uplink is eliminated for all users, and in the downlink, all users will be free of any signal power loss as well as interference.
\begin{table} 
	\centering 
	\caption{Asymptotic SINR in Data Transmission} 
	\renewcommand{\arraystretch}{1}
	\scalebox{1}{
		\begin{tabular}{c|c|c|c|c}
			\hline
			& \multicolumn{2}{c|}{UL} & \multicolumn{2}{c}{DL} \\		
			\cline{2-5}
			& UAV & GUE & UAV & GUE \\
			\hline
			Before PDC & $ \frac{1}{K_{\mathrm{u}}-1} $ & $ \frac{1}{K_{\mathrm{u}}} $ & $\frac{1}{K-1} $ & $ \frac{E_{\mathrm{d}}\beta_{jj}}{K_{\mathrm{u}}+1} $\\
			\hline
			After PDC & $ E_{\mathrm{u}} \beta_{ll} $ & $ E_{\mathrm{u}} \beta_{ll} $ & $ E_{\mathrm{d}} \beta_{ii} $ & $ E_{\mathrm{d}} \beta_{jj} $\\			
			\hline
		\end{tabular}%
	}
	\label{tab_sinr}
\end{table}

Next, to draw further insights for the high SNR regime, we assume that $ E_{\mathrm{u}} \beta_{ll} \gg 1 $, $ E_{\mathrm{d}} \beta_{ll} \gg 1, \forall l $, and $ p_{\mathrm{p}} \gg 1 $. Furthermore, it is assumed that for any BS $ l $, $ \beta_{lk} \simeq \beta_{ll}, \forall k \in \mathcal{U}_{l} $, in the uplink and $ \beta_{lk} \simeq \beta_{0}, \forall k \in \mathcal{U}_{l} \cup \{l\}, \forall l \in \mathcal{L} $, in the downlink, by ignoring the distance differences from different UAVs to BS $ l $. Then, from \eqref{equ_SINR_UL_GUE_M}, \eqref{equ_SINR_DL_UAV_M}, \eqref{equ_SINR_DL_GUE} and (\ref{equ_SINR_DC}), we can obtain the asymptotic SINRs for the uplink and downlink, respectively, shown in Table \ref{tab_sinr}. The main insights are highlighted as follows.
\begin{itemize}
	\item In the uplink, the receive SINRs of users before PDC are bounded by the number of interfering UAVs (i.e., $ K_{\mathrm{u}} $ for  GUE and $ K_{\mathrm{u}} - 1 $ for UAV), regardless of $ E_{\mathrm{u}} $. While after PDC, their SINRs can increase with $ E_{\mathrm{u}} $ due to the interference elimination. 
	\item In the downlink, the receive SINR of UAV users before PDC is bounded by the number of users sharing the given pilot (i.e., $ K -1 $), which is also regardless of $ E_{\mathrm{d}} $. In contrast, the receive SINR of GUE users increases with $ E_{\mathrm{d}} $ thanks to the negligible interference from far-apart non-associated BSs, but it decreases proportionally to $ {1}/({K_{\mathrm{u}}+1}) $ due to power loss. After PDC, power loss and interference are both eliminated, thus rendering the SINRs of all users to increase proportionally with $ E_{\mathrm{d}} $.
\end{itemize}

Motivated by the above results on the significant SINR performance gains after versus before PDC, we propose a practical scheme to resolve the UAV-induced PC next. 
%
\section{Proposed Algorithm}\label{sct_alg_t}
The key for mitigating the UAV-induced PC is to detect the interfering UAV LoS signals and extract their AoAs.
To this end, we first propose a successive LoS component detector by exploiting the AoA-dependent characteristics of LoS links.
Then, for each GUE user, the LoS interference can be identified thanks to the elevation angle separation between GUEs and UAVs. Whereas for each UAV user, its AoA cannot be separated from those of the other UAVs (if $ K_{\mathrm{u}} \geq 2 $). Thus, we propose to let each UAV transmit a different pilot in the next training block to help identify its AoA if the PC is detected during the first training block.
%
\subsection{PDC for GUE User}\label{sct_alg_gue}
First, we tackle the PC problem for GUEs. We assume that each BS knows that its associated user is a UAV or GUE prior to the uplink channel estimation (which can be realized in the preceding user-BS association stage). 
As assumed in Section \ref{sct_sysmdl}, we consider UAVs that fly higher than the BSs, thus we only need to consider the angle range in which the AoAs of the interfering UAV LoS signals possibly reside. The possible AoA ranges in the elevation and azimuth dimensions are respectively $ \Theta = [0,\pi/2]$ and $ \Phi = [-\pi, \pi] $.
In addition, since GUEs are located in practice lower than BSs, the LoS paths (if any) of their channels with the BSs will be out of the above ranges. Thus, for each GUE user, all the LoS components detected in the above ranges are considered as interference.

To detect all LoS components in the above range, each BS can perform a successive detection procedure by iteratively detecting and removing the strongest LoS component in its estimated channel until no additional strong LoS component can be found. 
Specifically, we first discretize the search ranges $  \Theta $ and $ \Phi $ as $ \theta_{\bar{m}} = {\bar{m} \pi}/({2 N_{\theta}}) , \bar{m} = 0,\cdots, N_{\theta}-1$, and $ \phi_{\bar{n}} = {\bar{n}\pi}/{N_{\phi}} - \pi, \bar{n} = 0,\cdots, N_{\phi}-1$, with $ N_{\theta} $ and $ N_{\phi} $ denoting the number of grids at $ \theta $ and $ \phi $ directions, respectively.\footnote{Generally, $ N_{\theta} $ and $ N_{\phi} $ strike a balance between quantization error and computational complexity. In practice, $ N_{\theta} $ and $ N_{\phi} $ should be set satisfying $ {\pi}/({2 N_{\theta}}) < \theta_{3\mathrm{dB}} $ and $ {\pi}/{N_{\phi}} < \phi_{3\mathrm{dB}} $, with $ \theta_{3\mathrm{dB}} $ and $ \phi_{3\mathrm{dB}} $ denoting 3-dB beam-width of the BS UCA.} 
Next, the BS associated with GUE $ j $ performs matched filtering over the (effective) channel estimate at each quantized direction $ (\theta_{\bar{m}},\phi_{\bar{n}}) $, and denote by $ T_{\bar{m},\bar{n}}^{q} $ the corresponding output in the $ q $-th round of detection, $ q \geq 1 $. Then, the decision on whether a LoS component is present or not in the $ q $-th round of detection is made according to
\begin{equation}\label{equ_mf_def}
\setstackgap{L}{.5\baselineskip}
\max_{\theta_{\bar{m}} \in \Theta,\phi_{\bar{n}} \in \Phi} \left \{ T_{\bar{m},\bar{n}}^{q} = \frac{1}{{M}} \left | \mathbf{a}^{H}(\theta_{\bar{m}},\phi_{\bar{n}}) \hat{\mathbf{h}}^{q}_{jj} \right |^{2} \right \}
\underset{\mathcal{H}_{0}}{\overset{\mathcal{H}_{1}}{\gtrless}} \zeta^{q},
\end{equation}
where $ \zeta^{q} $ is the threshold set as $ \zeta^{q} = \kappa \sum_{\bar{m},\bar{n}} T_{\bar{m},\bar{n}}^{q} /(N_{\theta} N_{\phi}),
 $
%
%
with $ \kappa $ being a positive constant. We then discuss the following two cases in \eqref{equ_mf_def}:
\begin{itemize}
	\item If $ \mathcal{H}_{1} $ holds, a LoS component is declared to be present, with its AoA corresponding to the maximum value of $ T_{\bar{m},\bar{n}}^{q} $'s, denoted by $ (\theta^{q}_{jj}, \phi^{q}_{jj}) $. Note that this LoS component can be reconstructed as $ {\overline{{\mathbf{h}}}}^{q}_{jj} = \mu_{jj}^{q} \mathbf{a}(\theta^{q}_{jj}, \phi^{q}_{jj}) $, where $ \mu_{jj}^{q} = \mathbf{a}^{\dagger}(\theta^{q}_{jj}, \phi^{q}_{jj}) \hat{\mathbf{h}}^{q}_{jj} $ accounts for its path loss and phase rotation, given by the optimal solution of $ \min_{\mu} \| \hat{\mathbf{h}}^{q}_{jj} - \mu \mathbf{a}(\theta^{q}_{jj}, \phi^{q}_{jj}) \|^{2} $. Then $ \overline{{{\mathbf{h}}}}^{q}_{jj} $ can be removed from the current effective channel estimate, i.e.,
	\begin{equation}\label{equ_hath_prj}
	\hat{\mathbf{h}}^{q+1}_{jj} = \hat{\mathbf{h}}^{q}_{jj} - \overline{{{\mathbf{h}}}}^{q}_{jj}. 
	\end{equation}
	The updated effective channel estimate $ \hat{\mathbf{h}}^{q+1}_{jj} $ is then substituted into \eqref{equ_mf_def}-\eqref{equ_hath_prj} for the next round of detection. 
	\item If $ \mathcal{H}_{0} $ holds, $ \hat{\mathbf{h}}^{q}_{jj} $ is assumed to contain no more LoS component. Thus, we terminate the successive detection and 
	take $ \hat{\mathbf{h}}^{q}_{jj} $ as the final channel estimate, denoted by $ \hat{\mathbf{h}}^{\mathrm{D}}_{jj} = \hat{\mathbf{h}}^{q}_{jj} $, and define $ L_{j} = q - 1 $ as the number of detected LoS components. 	
\end{itemize}
Note that at the beginning of the above successive detection, we set $ q = 1 $ and $ \hat{\mathbf{h}}^{1}_{jj} = \hat{\mathbf{h}}_{jj} $.
%
\subsection{PDC for UAV User} \label{sct_alg_uav}
For each UAV user $ i $, its associated BS first performs the successive detection proposed in the previous subsection to detect all $ L_{i} $ strong LoS components, with $ L_{i} \geq 0 $. Define $ \mathcal{D}_{i} $ as the set consisting of all detected strong LoS components, where $ \mathcal{D}_{i} = \{\overline{\mathbf{h}}_{ii}^{s}| s = 1,\cdots, L_{i}\} $ if $ L_{i} \geq 1 $, and $ \mathcal{D}_{i} = \O $ otherwise. If $ |\mathcal{D}_{i}| = 0 $, it implies that no significant LoS channel is detected for UAV user $ i $ (which occurs with a very low probability in practice); while if $ |\mathcal{D}_{i}| = 1 $, then a unique LoS channel is detected for the UAV user. In both cases, we can set the channel estimate for UAV user $ i $ as $ \hat{\mathbf{h}}_{ii}^{\mathrm{D}} = \hat{\mathbf{h}}_{ii}^{\mathrm{1}} $.
However, if $ |\mathcal{D}_{i}| \geq 2 $, then PC is considered to have occurred, which needs to be resolved by further processing.
The key is to identify which LoS channel in $ \mathcal{D}_{i} $ is due to UAV user $ i $, which is challenging since there is no prior knowledge on the UAV users' locations assumed to be known at the BSs.

To solve this problem, we propose that the associated BS with UAV user $ i $ informs it to send a different pilot in the next training block. 
Define $ \underline{\mathcal{D}}_{i} $ as the set consisting of all $ \underline{L}_{i} \geq 0 $ strong LoS components detected in the second training block, where $ \underline{\mathcal{D}}_{i} = \{\overline{\underline{\mathbf{h}}}_{ii}^{s}| s = 1,\cdots,\underline{L}_{i} \} $ if $ \underline{L}_{i} \geq 1 $, and $ \underline{\mathcal{D}}_{i} = \O $ otherwise. Then the desired LoS channel of UAV user $ i $ can be identified with a high probability by comparing $ {\mathcal{D}}_{i} $ and $ \underline{\mathcal{D}}_{i} $.
In practice, the two groups of UAVs that share the same pilot with UAV user $ i $ in each of the two training blocks are different with a high probability, since we assume the BSs randomly assign pilots to their associated users independently and the set of pilots is practically large. However, both $ {\mathcal{D}}_{i} $ and $ \underline{\mathcal{D}}_{i} $ should contain the LoS channel of UAV user $ i $, which is very likely to be their only common element.
Denote by $ \Delta \mathcal{D}_{i} $ the set consisting of the LoS components that are approximately equal (i.e., the Euclidean distance of the two vectors is less than a given small constant) in $ \mathcal{D}_{i} $ and $ \underline{\mathcal{D}}_{i} $.
If $ |\Delta \mathcal{D}_{i}| = 1 $, the only element in $ \Delta \mathcal{D}_{i} $ is taken as the LoS channel of UAV user $ i $. 
In practice, this is the most likely case since it is generally of very low probability to have two UAVs that have similar AoAs as well as distances (channel gains) with the associated BS of UAV user $ i $, and are also assigned with identical (randomly selected) pilots during the two training blocks. 
Nevertheless, if the above low-probability event occurs which results in $ |\Delta \mathcal{D}_{i}| > 1 $, we can only assure that the LoS components in $ \mathcal{D}_{i}/\Delta \mathcal{D}_{i} $ are interferences. In both cases, we remove the LoS interferences in $ \mathcal{D}_{i}/\Delta \mathcal{D}_{i} $ from $ \hat{\mathbf{h}}_{ii}^{\mathrm{1}} $ to obtain the decontaminated channel estimate $ \hat{\mathbf{h}}_{ii}^{\mathrm{D}} $.
If $ |\Delta \mathcal{D}_{i}| = 0 $, then we fail to identify any interfering LoS channel for UAV user $ i $ and simply set $ \hat{\mathbf{h}}_{ii}^{\mathrm{D}} = \hat{\mathbf{h}}_{ii}^{\mathrm{1}} $ (albeit this is also very unlikely in practice).
\section{Numerical Results}\label{sct_sml}
This section provides numerical results to compare the performance before (bf.) and after (af.) applying the proposed PDC scheme. The ideal case where each BS knows the channels of all users is also shown as the performance upper bound. In addition, the PDC scheme proposed in \cite{SF_GongZJ_TWC_19} is included as a benchmark. The cellular network topology is shown in Fig. \ref{fig_sysmdl}. The given pilot is reused by $ K = 9 $ cells, each serving one user (UAV or GUE) for the given pilot. We set the pilot reuse factor $ R=7 $, due to which the GUE-induced PC is negligible for the considered setup. 
Each BS's height is 25 meters (m) and the cell radius is 500 m. The UCA is employed at each BS with $ M = 128 $.
The heights of UAVs are uniformly distributed between 25 m and 300 m, while the heights of GUEs are fixed to be 1.5 m. The transmit powers of users and BSs are 23 dBm and 46 dBm, respectively. The noise power spectrum density at the receiver is $ - $164 dBm/Hz including a 10 dB noise figure, and the system bandwidth is 10 MHz. In addition, the parameter related to the threshold in the successive detection is $ \kappa = 3 $.

\begin{figure}[t]
	\centering
	\includegraphics[draft=false, width=.9\linewidth]{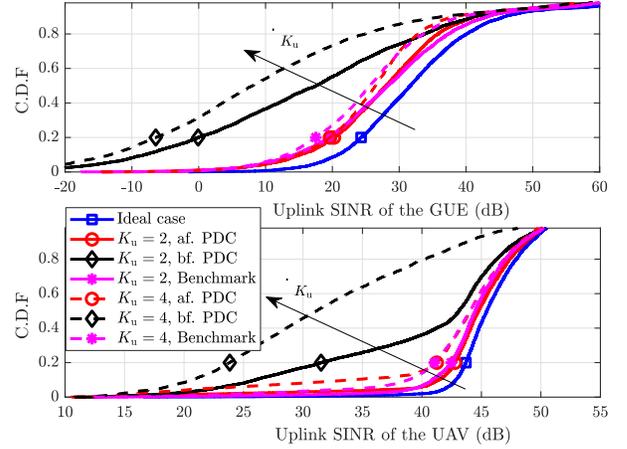}
	\vspace{-.3cm}
	\caption{The CDF for the SINR in the uplink. \label{fig_pfm_ul}}
\end{figure}

Fig. \ref{fig_pfm_ul} plots the empirical cumulative distribution function (CDF) for the SINR in the uplink data transmission. It is observed that the proposed PDC scheme achieves significant performance gains for both UAVs and GUEs and also obtains almost the same gains as the benchmark. Furthermore, in accordance with our analysis in Section \ref{sct_sysmdl}, when more UAVs are involved (i.e., $ K_{\mathrm{u}} $ is larger), more severe SINR degradation is resulted for both UAVs and GUEs before applying the PDC scheme.
 
\begin{figure}[t]
	\centering
	\includegraphics[draft=false, width=.9\linewidth]{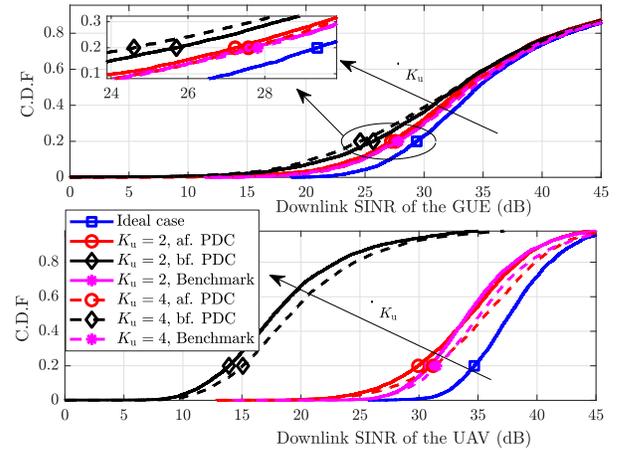}
	\vspace{-.3cm}
	\caption{The CDF for the SINR in the downlink. \label{fig_pfm_dl}}
\end{figure}
	
Fig. \ref{fig_pfm_dl} plots the CDF for the SINR in the downlink data transmission. One can observe that the UAVs suffer from more severe SINR degradation than GUEs before applying the proposed PDC scheme, and significant performance gains are achieved after applying the PDC scheme, especially for UAV users. These results are also consistent with our analysis in Section \ref{sct_sysmdl}. 
\section{Conclusion}
This letter addresses a new and challenging PC issue in massive MIMO networks communicating with UAVs. We first derive the SINRs of UAVs and GUEs before and after the PDC for both the uplink and downlink, and unveil their large performance gaps. 
Then we propose practical algorithms to resolve the UAV-induced PC for both GUEs and UAVs by exploiting their different channel characteristics with the BSs.
Numerical results show significant SINR performance improvement in both uplink/downlink data transmission after applying the proposed PDC algorithms. 
\vspace{-2.5mm}
\bibliographystyle{IEEEtran}
\bibliography{mybib}

\end{document}